\begin{document}

\title{Communication over a Time Correlated Channel with an Energy Harvesting Transmitter}

\author{
\IEEEauthorblockN{Mehdi Salehi Heydar Abad}
\IEEEauthorblockA{\normalsize Faculty of Engineering \\
and Natural Sciences\\
Sabanci University,
Istanbul, Turkey\\ mehdis@sabanciuniv.edu}
\and
\IEEEauthorblockN{Deniz G\"und\"uz}
\IEEEauthorblockA{\normalsize Department of Electrical \\
and Electronic Engineering,\\
Imperial College London, U.K.\\ d.gunduz@imperial.ac.uk}
\and
\IEEEauthorblockN{Ozgur Ercetin}
\IEEEauthorblockA{\normalsize Faculty of Engineering \\
and Natural Sciences\\
Sabanci University,
Istanbul, Turkey\\ oercetin@sabanciuniv.edu}
\thanks{ This work was in part supported by EC H2020-MSCA-RISE-2015 pro- gramme under grant number 690893, by Tubitak under grant number 114E955 and by British Council Institutional Links Program under grant number 173605884.}
}

\maketitle
\newtheorem{theorem}{Theorem}
\newtheorem{lemma}{Lemma}
\newtheorem{corollary}{Corollary}

\begin{abstract}
In this work, communication over a time-correlated point-to-point wireless channel is studied for an energy harvesting (EH) transmitter. In this model, we take into account the time and energy cost of acquiring channel state information. At the beginning of the time slot, the EH transmitter, has to choose among three possible actions: i) deferring the transmission to save its energy for future use, ii) transmitting without sensing, and iii) sensing the channel before transmission. At each time slot, the transmitter chooses one of the three possible actions to maximize the total expected discounted number of bits transmitted over an infinite time horizon. This problem can be formulated as a partially observable Markov decision process (POMDP) which is then converted to an ordinary MDP by introducing a belief on the channel state, and the optimal policy is shown to exhibit a  threshold behavior on the belief state, with battery-dependent threshold values. Optimal threshold values  and corresponding optimal performance are characterized through numerical simulations, and it is shown that having the sensing action and intelligently using it to track the channel state improves the achievable long-term throughput significantly.
\end{abstract}

\IEEEpeerreviewmaketitle

\section{Introduction}
\vspace{-0.1cm}
Due to the tremendous increase in the number of battery-powered wireless communication devices over the past decade, replenishing the batteries of these devices by harvesting energy from natural resources has become an important research area \cite{Paradiso}. Transmitters may harvest energy via wind turbines, photovoltaic cells, thermoelectric generators, or from mechanical vibrations through piezoelectric or electromagnetic technology \cite{EHsource}. Regardless of which type of energy harvesting (EH) device and natural energy source is employed, a main concern is the stochastic nature of the EH process driving the wireless communications. The associated battery recharging process can be modeled either as a continuous, or a discrete, \cite{EHcont,EHdos} stochastic process. 

We consider a wireless point-to-point link with a transmitter equipped with a finite-capacity battery fed by an EH device.  At each time slot, a unit of energy is harvested by the transmitter according to a binary random process independent over time\footnote{Typically, the EH process is neither memoryless nor discrete, and the energy is accumulated continuously over time.  However, in order to develop the analytical model underlying this paper, we follow the common assumption in the literature \cite{EHcont}, and assume that the continuous energy arrival is accumulated in an intermediate energy storage device to form quantas.}. We assume that the transmitter can accurately observe the current energy level of the battery, and it has the knowledge of the statistics of the EH process. 
The wireless channel is time-varying and has memory across time. The channel memory is modeled with a finite state Markov chain \cite{markov}, where the next channel state depends only on the current state. A convenient and often-employed simplification of the Markov model is a two state Markov chain, known as the Gilbert-Elliot channel \cite{gilbert}. This model assumes that the channel can be either in a \emph{good} or a \emph{bad} state. We assume that in the bad state, transmitter cannot transmit any information reliably, while in the good state it may transmit $R$ bits per time slot by spending exactly one unit of energy from its battery.


 In this work, differently from most of the literature on EH systems, we take into account the energy cost of acquiring channel state information (CSI). At the beginning of each time slot, without knowing the current CSI, EH transmitter has three possible actions: i) deferring the transmission to save its energy for future use, ii) transmitting at a rate of $R$ bits per time slot, and iii) sensing the channel to reveal the channel state by consuming a portion of its energy and transmission time, followed by  transmission at a reduced rate consuming the remainder of the energy unit, if the channel is in the good state. If the channel is in a bad state, the transmitter remains silent in the rest of the time slot, saving its energy for future. If the level of the battery is less than a unit of energy at the beginning of a time slot, no transmission is possible. Our objective is to maximize the total expected discounted number of bits transmitted over an infinite time horizon.

Markov decision process (MDP) tools have been extensively utilized in the recent literature in solving communication problems involving EH devices. In \cite{MDP1} authors propose a simple single-threshold policy for a solar-powered sensor operating over a fading wireless channel. Optimality of a single-threshold policy is proven \cite{zorzi} when transmitting packets with importance values on EH transmitter. Problem of energy allocation for gathering and transmitting data in an EH communication system is studied in \cite{MDP2} and \cite{MDP3}. The scheduling of EH transmitters with time correlated energy arrivals  to optimize the long term sum throughput is investigated in \cite{deniz}. The allocation of energy over a finite horizon to optimize the throughput is considered in \cite{MDP4}, where it is assumed that either the current or the future energy and channel states are provided to the transmitter. In \cite{MDP5}, for a Markov EH process, and a static channel, a discrete power allocation problem is studied to maximize the throughput. In \cite{mehdi} throughput is optimized over a multiple access channel with collisions, considering spatially correlated energy arrivals at the transmitters. 

In a closely related work \cite{efremidus}, scheduling of an EH transmitter over a Gilbert-Elliot channel is considered. However, unlike our work, the transmitter \cite{efremidus} always has perfect CSI, obtained by sensing at every time slot, and makes a decision to defer or to transmit, based on the current CSI and battery state. Similarly, without considering the channel sensing capability, \cite{Aprem} addresses the problem of optimal power management for an EH sensor over a multi-state wireless channel with memory, using the ACK/NACK channel feedback to track the channel state. In our work, instead, we take into account the energy cost of channel sensing which can be significant for EH transmitters. Therefore, the EH transmitter does not necessarily have perfect CSI, but keeps an updated belief of the channel state according to its past observations.  Hence, the transmitter may occasionally take a third decision (in addition to defer and to transmit) of sensing the current channel state to improve its belief. Channel sensing is an essential part of opportunistic and cognitive spectrum access. In \cite{sense1}, the authors investigate the problem of optimal access to a Gilbert-Elliot channel, wherein an energy-unlimited transmitter senses the channel at every time slot. In \cite{gilbertmakale} channel sensing is done only occasionally. The transmitter can decide to transmit at a high or a low rate without sensing the channel; or can first sense the channel and transmit at a reduced rate due to the time spent for sensing. The energy cost of sensing is ignored in \cite{gilbertmakale}.

In Section \ref{sec:SystemModel} we explain the channel and EH process models under consideration, and elaborate on the transmission protocol. In Section \ref{sec:MDPFormulation}, we formulate the problem as a two state partially observable MDP (POMDP) which is then converted to a continuous-state MDP by introducing a belief state. In Section \ref{sec:StructureOfTheOptimalPolicy} we show that the optimal policy is of threshold type, for which the optimal threshold values depend on the state of the battery. In Section \ref{num} we present simulation results that numerically obtain the optimal threshold values and the optimal performance. In Section \ref{concl} we conclude the paper and present future research directions.

\begin{figure}[t]
  \centering
    \includegraphics[scale=.2]{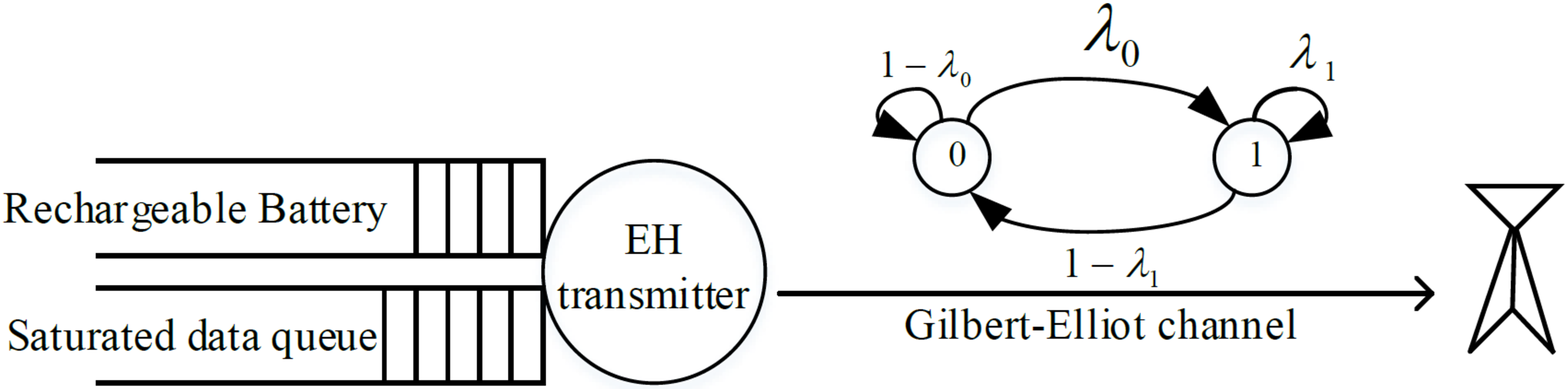}
		  \caption{System model.}
			\label{EHmodel}
\end{figure}

\section{System Model}
\label{sec:SystemModel}
\subsection{Channel and energy harvesting models}
\label{sec:ChannelModelAndEnergyHarvestingAssumptions}

Consider the communication system illustrated in Fig. \ref{EHmodel}, in which an EH transmitter communicates over a slotted Gilbert-Elliot channel. Let $G_t$ denote the state of the channel at time slot $t$ which is modeled as a one-dimensional Markov chain with two states: a good state denoted by $1$, and a bad state denoted by $0$. Channel transitions occur at the beginning of each time slot. The transition probabilities are given by $\mathds{Pr}\left[G_t=1|G_{t-1}=1\right]=\lambda_1$ and $\mathds{Pr}\left[G_t=1|G_{t-1}=0\right]=\lambda_0$. The transmitter can transmit $R$ bits per time slot if $G_t=1$, and zero bits if $G_t=0$.  

A unit of energy arrives at the end of  time slot $t$ according to an independent and identically distributed (i.i.d.) Bernoulli process, denoted by $E_t$, with probability $q$, i.e., $\mathds{P}\left[E_t=1\right]=q$ for all $t$. The transmitter stores the energy packets in a battery with a storage capacity of $B_{max}$ units of energy. We denote the state of the battery, i.e., the energy available in the battery at the beginning of time slot $t$, by $B_t$.
An energy unit is consumed at each slot if the transmitter decides to transmit in that slot. A unit of energy consumed per slot includes the energy cost of sensing (if the transmitter decides to sense the channel), transmission of the message, and the reception of ACK or NACK from the receiver. We assume that the transmitter has an infinitely backlogged data queue, and thus, it always has a packet to transmit. 

\subsection{Transmission protocol}
\label{sec:TransmissionProtocol}

At the beginning of each time slot, the transmitter may choose among three possible actions: $i)$ deferring the transmission, $ii)$ channel sensing and transmitting opportunistically, and $iii)$ transmitting without sensing.

\emph{Deferring the transmission:} 
This action (denoted by $D$) corresponds to the case in which the transmitter either believes that the channel is in a bad state, or observes that its battery has low energy. If this action is chosen, there is no message exchange between the transmitter and the receiver. Hence, the receiver does not send any feedback, and therefore the transmitter cannot obtain any knowledge about the current channel state. The scenario in which the transmitter is informed about the current channel state even when it does not transmit any data packet is equivalent to the system model investigated in \cite{efremidus}.

\emph{Channel sensing and transmitting opportunistically:} This action (denoted by $O$) corresponds to the case in which the transmitter decides to sense the channel at the beginning of the time slot. We assume that sensing consumes a fraction $0<\tau<1$ of an energy unit. Sensing is carried out by the transmitter first sending a control/probing packet, to which, the receiver responds with a packet indicating the channel state. We assume that the time it takes to sense the channel is $\tau$ seconds and the transmitter consumes on average the same power as data transmission over the sensing period. Therefore, we equivalently assume for simplicity that $\tau = 1/k$ for some $k\in \mathbb{Z}^+$. In the remaining $1-\tau$ seconds, the transmitter may choose to transmit data at the same rate it would without channel sensing, which means that by the end of the time slot it transmits $(1-\tau)R$ bits per time slot. 

If the channel is revealed to be in the bad state, transmitter defers its transmission and saves the rest of the energy unit (i.e., $1-\tau$). Note that thanks to the channel sensing capability, in the case of a bad state, the transmitter wastes only $\tau$ portion of a unit energy packet, and saves the remaining energy by deferring its transmission, which as we will show later in this paper, is an important advantage in EH networks with scarce energy sources. 

\emph{Transmitting without sensing:} This action (denoted by $T$) corresponds to the case when transmitter attempts to transmit $R$ bits in the current time slot without sensing the channel. If the channel is in a good state, the transmission is successful and the receiver sends an ACK. Otherwise, the transmission fails, and the receiver sends a NACK. Note that, at the end of the slot the transmitter has the perfect knowledge of the current channel state.


\section{Partially Observable Markov Decision Process (POMDP) formulation}
\label{sec:MDPFormulation}

At the beginning of each time slot, the transmitter chooses among the three possible actions based on the state of its battery, and its belief about the channel state to maximize a long-term discounted reward to be defined shortly. Although the transmitter is perfectly aware of its battery state, it cannot directly observe the current channel state. Hence, the problem in hand becomes a partially observable Markov decision process (POMDP).

Let the state of the system at time $t$ be denoted by $S_{t}= (B_{t},X_{t})$. We define the $\mathit{belief}$ of the transmitter at time slot $t$, denoted by $X_{t}$, as the conditional probability that the channel is in the good state at the beginning of the current slot, i.e., $X_{t}=\mathds{Pr}\left[G_t=1|\mathcal{H}_{t}\right]$, given the history $\mathcal{H}_{t}$, where $\mathcal{H}_{t}$ represents all the past actions and observations of the transmitter up to slot $t$. The transmitter's belief constitutes a sufficient statistic to characterize its optimal actions \cite{lovejoy}. Note that with this definition of the state, the POMDP problem is converted into a MDP with an uncountable state space $\left[0,\tau,2\tau,\ldots,B_{max}\right]\times\left[0,\ 1\right]$\footnote{Note that since sensing without transmission is possible, i.e., consuming only $\tau$ fraction of the energy unit, the battery can take fraction of units as states.}.

A transmission policy $\pi$ describes a set of rules that dictates which action to take depending on the history. Let $V^{\pi}(b,p)$ be the expected infinite-horizon discounted reward with initial state $S_0=(b,\ \mathds{Pr}\left[G_0=1|\mathcal{H}_{0}\right]=p)$ under policy $\pi$ with discount factor $\beta\in[0,\ 1)$. The use of the expected discounted reward allows us to obtain a tractable solution, and one can gain insights into the optimal policy for the average reward when $\beta$ is close to 1.  It is also discussed in [5] that $\beta$ can be interpreted as the probability that a particular user is allowed to use the channel, or as the probability of the transmitter to remain active at each time slot as in \cite{learningdeniz}. For an initial belief $p$ the expected discounted reward has the following expression
\begin{align}
V^{\pi}(b,\ p) = \mathds{E}\left[\sum^{\infty}_{t=0}\beta^{t}R(S_{t},A_{t})|S_{0}=(b,\ p)\right],\label{Vdef}
\end{align}
where $t$ is the time index, $A_{t}\in\left\{D,O,T\right\}$ is the action chosen at time $t$, and $R(S_{t},A_{t})$ is the expected reward acquired when action $A_t$ is taken at state $S_{t}$. The expectation in (\ref{Vdef}) is over state sequence distribution induced by the given transmission policy $\pi$. The expected reward when action $A_t$ is chosen at state $S_t$ is given as follows:
\begin{align}
R(S_{t},A_{t}) = \left\{
\begin{array}{ll}
X_{t}R & \text{if }  A_t = T\ \text{and}\ B_t\geq 1,\\
(1-\tau)X_t R & \text{if }  A_t = O\ \text{and}\ B_t\geq 1,\\
0 & \text{otherwise}.
\end{array} \right.\label{eq:R}
\end{align}
Since at least one energy unit is required for transmission, if the battery state is less than one unit, the reward becomes zero. Hence, in explaining the expected reward function in (\ref{eq:R}), we consider actions when the battery state is greater than or equal to one. If the action of transmitting without sensing is chosen, $R$ bits per time slot are transmitted successfully if the channel is in a good state, and $0$ bits if the channel is in a bad state. Since the belief, $X_t$, represents the probability of the channel being in a good state, the expected reward is given by $X_tR$. If the action of transmitting opportunistically is chosen, $\tau$ fraction of energy unit is spent sensing the channel with the remaining energy being used for transmission if the channel is sensed to be in a good state. In this case, $(1-\tau)R$ bits per time slot are transmitted successfully.  If the channel is sensed to be in a bad state, the transmitter remains silent in the rest of the time slot. The expected reward in this case is $(1-\tau)X_t R$. Finally, if the action of deferring the transmission is taken the transmitter neither senses the channel nor transmits, so the reward is zero.

Define the value function $V(b,\ p)$ as
\begin{align}
V(b,\ p) &= \max_{\pi}V^{\pi}(b,\ p) \nonumber\\
&\text{for all}\ b\in\left[0,\tau,2\tau,\ldots,B_{max}\right]\ \text{and}\ p\in\left[0,\ 1\right].
\end{align}
It is well known that the optimal value of the infinite-horizon expected reward can be achieved by a stationary policy, i.e., there exists a stationary policy $\pi^*$ such that $V(b,\ p) = V^{\pi^*}(b,\ p)$ \cite{puterman}. The value function $V(b,\ p)$ satisfies the Bellman equation
\begin{align}
V(b,\ p) = \max_{A\in\left\{D,O,T\right\}}\left\{V_{A}(b,\ p)\right\},
\end{align}
where $V_{A}(b,\ p)$ is the action-value function, defined as the expected infinite-horizon discounted reward acquired by taking action $A$ when the state is $(b,\ p)$, and is given by
\begin{align}
V_{A}(b,\ p)=&R((b,\ p),A)\nonumber\\
&+\beta\mathds{E}_{(\acute{b},\ \acute{p})}\left[V(\acute{b},\ \acute{p})|S_{0}=(b,\ p),A_0=A\right], \label{actionvalue}
\end{align}
where $(\acute{b},\ \acute{p})$ denotes the next state when action $A$ is chosen at state $S_0=(b,\ p)$. The expectation in (\ref{actionvalue}) is over the distribution of possible next states. In the following, we define and explain the value function $V_{A}(b,\ p)$, and how the system state evolves for each action.


\emph{Deferring the transmission:}
If this action is taken, since there is no transmission, there is no ACK or NAK from the receiver, and thus, the transmitter does not learn the state of the channel. Therefore the next belief is obtained as the probability of finding the channel in a good state given the current belief state. If the transmitter had a belief $X_t=p$ at time slot $t$, after taking action D, its belief at the beginning of the next slot is updated as
\begin{align}
J(p) = \lambda_{0}(1-p)+\lambda_{1}p.
\end{align}
In every time slot, a unit of energy is harvested with probability $q$. Thus, after taking action D, the value function evolves as follows:
\begin{align}
&V_{D}(b,\ p) \nonumber\\
&= \beta\left[qV\left(\min\left\{b+1,B_{max}\right\},\ J(p)\right)+(1-q)V\left(b,\ J(p)\right)\right].
\end{align}
Note that the term $\min\left\{b+1,B_{max}\right\}$ is used to ensure that the battery state does not exceed the battery capacity, $B_{max}$.

\emph{Channel sensing and transmitting opportunistically:}
For this action, two scenarios are possible. If $b\geq 1$ and EH decides to transmit opportunistically, then it consumes $\tau$ fraction of energy to first sense the channel and obtain the current channel state. Based on the outcome of the channel sensing, if the channel is found to be in a good state, $(1-\tau)$ units of energy is used to transmit $(1-\tau)R$ bits per time slot.  Also, the belief state is updated as $\lambda_1$ for the next time slot. 

On the other hand, if the outcome of the channel sensing reveals the channel to be in a bad state, then the transmitter defers its transmission, and saves $(1-\tau)$ units of energy for possible future transmissions. Also, the channel belief is updated as $\lambda_0$ for the next time slot.  Based on the aforementioned discussion, for $b\geq 1$ the evolution of the value function can be written as:
\vspace{-0.2cm}
\begin{align}
V_{O}(b,\ p) =& p\left[(1-\tau)R+ \beta\left(qV(b,\ \lambda_1)+(1-q)V(b-1,\ \lambda_1)\right)\right]\nonumber\\
+&(1-p)\beta\left[qV(\right.\min\left\{b-\tau+1,B_{max}\right\},\ \lambda_0)\nonumber\\
+&(1-q)V(b-\tau,\ \lambda_0)\left.\right].
\end{align}

If $\tau\leq b<1$, then transmission is not possible since transmission requires at least one unit of energy. However, it is still possible to sense the channel, since it only requires $\tau$ fraction of energy. This may happen when transmitter believes that learning the channel state will help its decision in the future. Thus for $\tau\leq b<1$, the value function evolves as:
\begin{align}
V_{O}(b,\ p) =& \beta\left[\right.qpV(b-\tau+1,\ \lambda_1)\nonumber\\
&+q(1-p)V(b-\tau+1,\ \lambda_0)+(1-q)pV(b-\tau,\ \lambda_1)\nonumber\\
&+(1-q)(1-p)V(b-\tau,\ \lambda_0)\left.\right].
\end{align}

\emph{Transmitting without sensing:}
This action can only be chosen if the battery state is greater than or equal to one, i.e., $b\geq 1$\footnote{Note that we are aware that in the generic MDP formulation, in every state, we should have the same set of actions. We can re-define the reward function by assigning $-\infty$ reward for those actions that are not possible to be taken in specific states to account for this issue. For the ease of comprehension, we chose to present the formulation in this manner.}.  Under this action, the transmitter transmits regardless of the actual state of the channel, costing one unit of energy. If the channel is in the good state, $R$ bits per time slot are successfully delivered to the receiver, and the receiver sends back an ACK.  Otherwise, the channel is in the bad state, so the transmission fails, and the receiver sends back a NAK. Meanwhile, the channel is in a good state with probability $p$, i.e., the current belief state, and the belief in the next time slot will be $\lambda_1$. Also the channel is in a bad state with probability $1-p$ and the belief in the next time slot will be $\lambda_0$. Hence, the value function evolves as:
\begin{align}
V_{T}(b,\ p) =& p\left[ R + \beta\left(qV(b,\ \lambda_1)+(1-q)V(b-1,\ \lambda_1)\right)\right]\nonumber\\
&+(1-p)\beta\left[qV(b,\ \lambda_0)+(1-q)V(b-1,\ \lambda_0)\right]
\end{align}

\section{Structure of The Optimal Policy}
\label{sec:StructureOfTheOptimalPolicy}

In this section, we prove that the optimal policy has a threshold type structure on the belief state. First, we need to prove some of the properties of the value function. We begin with establishing the convexity of the optimal value function with respect to the belief state.

\begin{lemma}
\label{thm:convex}
For any given $b\geq 0$, V(b,\ p) is convex in $p$.
\end{lemma}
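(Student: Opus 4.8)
The plan is to prove convexity via value iteration, exploiting that convexity is preserved under the Bellman operator and under pointwise limits. I would define the sequence $V_0(b,p)\equiv 0$ and $V_{n+1}(b,p)=\max_{A\in\{D,O,T\}}V_A^{(n)}(b,p)$, where $V_A^{(n)}$ denotes the action-value function obtained by substituting $V_n$ for $V$ in the continuation terms of the expressions for $V_D$, $V_O$, and $V_T$. Since the per-slot reward is bounded and $\beta<1$, the Bellman operator is a contraction, so $V_n\to V$ uniformly; because a pointwise limit of convex functions is convex, it suffices to show by induction that $p\mapsto V_n(b,p)$ is convex for every $b$ and every $n$.

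The base case is immediate, as $V_0\equiv 0$ is convex. For the inductive step I would assume $p\mapsto V_n(b,p)$ is convex for all $b$ and verify that each action-value function inherits convexity. The cleanest observation is that both $V_T^{(n)}(b,p)$ and $V_O^{(n)}(b,p)$ (in either battery regime) are in fact \emph{affine} in $p$: when the transmitter senses or transmits, it learns the true channel state, so the next belief collapses to the fixed constants $\lambda_1$ or $\lambda_0$, independent of $p$. Hence every continuation term $V_n(\cdot,\lambda_i)$ is a constant with respect to $p$, the reward terms $pR$ and $(1-\tau)pR$ are linear, and the whole expression is a linear-in-$p$ combination of constants; an affine function is trivially convex.

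The only action whose continuation belief genuinely depends on $p$ is deferral, with $V_D^{(n)}(b,p)=\beta[qV_n(\min\{b+1,B_{max}\},J(p))+(1-q)V_n(b,J(p))]$. Here I would use that the belief update $J(p)=\lambda_0(1-p)+\lambda_1 p$ is affine in $p$, so each composition $p\mapsto V_n(b',J(p))$ is a convex function precomposed with an affine map and therefore convex; a nonnegative combination of convex functions is again convex, giving convexity of $V_D^{(n)}$. Finally, since the pointwise maximum of convex functions is convex, $V_{n+1}(b,p)=\max_A V_A^{(n)}(b,p)$ is convex in $p$, closing the induction.

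I expect the substantive content to be concentrated entirely in the affine-versus-composition dichotomy above; the remaining ingredients (contraction and convergence of value iteration, and preservation of convexity under limits and maxima) are standard. The one point requiring care is the deferral step, since it is the sole place where the next belief is a nontrivial function of $p$—but because $J$ is affine rather than merely monotone, convexity is preserved without any extra hypotheses.
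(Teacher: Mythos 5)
Your proof is correct. Note that the paper itself gives no in-text argument for this lemma---it defers entirely to the extended version \cite{mehdi2}---so there is nothing printed here to compare against line by line; however, your value-iteration induction (the action-value functions for $O$ and $T$ are affine in $p$ because sensing or ACK/NACK feedback collapses the next belief to the constants $\lambda_0$ or $\lambda_1$; the deferral term is convex as the composition of the convex iterate with the affine update $J(p)=\lambda_0(1-p)+\lambda_1 p$; and convexity survives pointwise maxima and the uniform limit guaranteed by the $\beta$-contraction) is the standard Smallwood--Sondik argument for belief-state MDPs and is almost certainly the route taken in that reference. The one point worth stating explicitly, which you handle implicitly via ``either battery regime,'' is that the set of admissible actions depends on $b$ but not on $p$, so for each fixed $b$ the Bellman maximum is over a fixed collection of convex functions of $p$ and convexity is preserved.
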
 
\begin{proof}
The proof is given in \cite{mehdi2}.
\end{proof}

In the following lemma, we show that the value function is a non decreasing function of battery state, $b$. This lemma provides the intuition why deferring or sensing actions are advantageous in some states. The incentive of taking these actions is that the value function transitions into higher values without consuming any energy. Moreover, it states that the value function is also non-decreasing with respect to the belief state, $p$. 
\begin{lemma}
\label{thm:nondecreasing-battery}
	 Given any belief $p$, $V(b_1,p)\geq V(b_0,p)$ when $b_1>b_0$. Moreover, for a fixed battery state $b$, if $p_{1}>p_{0}$ then $V(b,p_{1})\geq V(b,p_{0})$. 
\end{lemma}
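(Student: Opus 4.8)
The plan is to work with the value-iteration approximations $V_n$, defined by $V_0\equiv 0$ and $V_{n+1}=\mathcal{T}V_n$, where $\mathcal{T}$ is the Bellman operator built from the action-value formulas for $V_D,V_O,V_T$ in Section \ref{sec:MDPFormulation}. Since the rewards are bounded and $\beta<1$, $V_n\to V$ uniformly, so it suffices to establish each property for every $V_n$. I will first prove the battery statement by a coupling argument valid at every horizon, and then prove the belief statement by induction on $n$, carrying along convexity of $V_n(b,\cdot)$ -- the finite-horizon analogue of Lemma \ref{thm:convex} -- as an auxiliary hypothesis. Throughout I use the standing assumption that the channel is positively correlated, $\lambda_1\ge\lambda_0$, so that the one-step predictor $J(p)=\lambda_0+(\lambda_1-\lambda_0)p$ is non-decreasing in $p$.

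For the battery statement the cleanest route is a sample-path coupling rather than algebra, which also avoids the case split at $b=1$. Fix $p$ and $b_1>b_0$, and run two transmitters on the \emph{same} channel realization and the \emph{same} harvest sequence $\{E_t\}$, letting the richer one ($b_1$) imitate, action for action, an optimal policy of the poorer one ($b_0$). Because both see the same channel, they observe identical ACK/NACK and sensing outcomes, hence maintain identical belief trajectories and, by \eqref{eq:R}, collect identical per-slot rewards. A one-line induction on $t$, using that $x\mapsto\min\{x,B_{max}\}$ is non-decreasing, shows the richer battery never drops below the poorer one, so every action taken by the poorer transmitter stays feasible for the richer one. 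Thus the richer transmitter attains at least $V^{\pi^*}(b_0,p)=V(b_0,p)$, giving $V(b_1,p)\ge V(b_0,p)$.

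The belief statement is the delicate part, and the main obstacle is that the opportunistic action value $V_O(b,\cdot)$ need \emph{not} be monotone in $p$: increasing $p$ shifts mass from the bad-sensing branch, which conserves $1-\tau$ units of energy, onto the transmitting branch, so when energy is scarce and future-valued $V_O$ can actually decrease in $p$. Hence the usual ``every action value is non-decreasing, so the max is'' argument fails. I will instead combine two facts. First, the other two action values are monotone: $V_T^{(n+1)}(b,\cdot)$ is affine with slope $R+\beta q[V_n(b,\lambda_1)-V_n(b,\lambda_0)]+\beta(1-q)[V_n(b-1,\lambda_1)-V_n(b-1,\lambda_0)]\ge R\ge 0$ by the inductive hypothesis applied at $\lambda_1\ge\lambda_0$, and $V_D^{(n+1)}(b,p)=\beta[qV_n(\min\{b+1,B_{max}\},J(p))+(1-q)V_n(b,J(p))]$ is non-decreasing in $p$ as a composition of the non-decreasing $J$ with the belief-monotone $V_n$. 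Second, I claim $p=0$ minimizes $V_{n+1}(b,\cdot)$: at $p=0$ every continuation belief equals $\lambda_0$, and comparing continuation battery levels termwise ($\min\{b+1,B_{max}\}\ge\min\{b-\tau+1,B_{max}\}$, $b\ge b-\tau$, $b\ge b-1$) the battery-monotonicity already proved yields $V_D^{(n+1)}(b,0)\ge V_O^{(n+1)}(b,0)$ and $V_D^{(n+1)}(b,0)\ge V_T^{(n+1)}(b,0)$, so $D$ is optimal and $V_{n+1}(b,0)=V_D^{(n+1)}(b,0)$; since $V_{n+1}\ge V_D^{(n+1)}$ everywhere and $V_D^{(n+1)}(b,\cdot)$ is non-decreasing, $V_{n+1}(b,p)\ge V_D^{(n+1)}(b,p)\ge V_D^{(n+1)}(b,0)=V_{n+1}(b,0)$ for every $p$.

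Finally I close the induction using convexity. A convex function on $[0,1]$ that attains its minimum at the left endpoint is non-decreasing: for $0\le p<p'$, writing $p=(1-t)\cdot 0+tp'$ with $t=p/p'$, convexity gives $V_{n+1}(b,p)\le(1-t)V_{n+1}(b,0)+tV_{n+1}(b,p')\le V_{n+1}(b,p')$, the last step using $V_{n+1}(b,0)\le V_{n+1}(b,p')$ from the minimizer claim. This establishes belief-monotonicity for $V_{n+1}$; together with the routine verification of convexity for each $V_n$ (as in the proof of Lemma \ref{thm:convex}) the induction is complete, and letting $n\to\infty$ transfers both monotonicities to $V$. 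The step I expect to require the most care is the minimizer claim at $p=0$, since it is precisely where the non-monotone action $O$ must be shown to be dominated, and it is the reason convexity, rather than action-wise monotonicity, is what finishes the proof.
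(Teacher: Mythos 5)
The paper itself gives no proof of this lemma (it is deferred to the extended version \cite{mehdi2}), so your argument must stand on its own. The belief half largely does: your observation that $V_O(b,\cdot)$ need not be monotone in $p$, and the workaround --- show that $p=0$ is a global minimizer of $V_{n+1}(b,\cdot)$ because at $p=0$ every action leads to continuation belief $\lambda_0$ and $D$ then dominates $O$ and $T$ term by term, then upgrade an endpoint minimum to monotonicity via convexity --- is a correct and rather elegant way to run the induction. Two caveats: it consumes battery-monotonicity of every iterate $V_n$ (see below), and it needs the positive-correlation hypothesis $\lambda_1\ge\lambda_0$, which the paper never states but which is genuinely necessary (with $\lambda_1<\lambda_0$ one can construct instances, e.g. $\lambda_1=0$, $\lambda_0=1$, $q=0$, $b=2$, where $V(b,0)>V(b,1/2)$, so the lemma itself fails); you were right to flag it, but it must be promoted to an explicit assumption.

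The genuine gap is in the battery half, and it sits exactly at the case split you claim the coupling avoids. ``Imitating action for action'' is not well defined across the boundary $b=1$, because action $O$ has different semantics on its two sides: for $\tau\le b<1$ it is sense-only (cost $\tau$, no reward), while for $b\ge1$ it is sense-then-transmit-if-good (cost $1$ on good-channel paths, reward $(1-\tau)R$). Take $\tau=0.2$, $b_0=0.9$, $b_1=1$, and suppose the poorer transmitter's optimal policy senses --- the paper's Fig.~\ref{5region} shows this really occurs for $b<1$. On good-channel sample paths the richer transmitter is then forced to transmit, ending the slot with battery $0$ while the poorer one ends with $0.7$. The invariant ``the richer battery never drops below the poorer one'' is therefore false, and the one-line induction collapses; worse, this configuration can arise mid-trajectory even if both batteries start above $1$, since the coupling preserves the battery gap while the levels drift downward. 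The lemma is still true --- in that slot the richer transmitter banks an extra $(1-\tau)R$ of reward while falling at most $1-\tau$ units of energy behind --- but converting that trade into a proof requires bounding the continuation value of the energy deficit by the banked reward, i.e., a marginal-value (Lipschitz-in-$b$) estimate on $V$, which is itself delicate: the naive bound $V(b+\delta,p)-V(b,p)\le\delta R$ is false near the threshold $b=1$ (with $q=0$, $p=\lambda_1=\lambda_0=1$ one has $V(1,1)-V(1-\epsilon,1)=R$). So the pair $b_0\in[\tau,1)$, $b_1\ge1$ needs an explicit, separate argument --- either a correctly formulated energy-value bound, or an induction on $V_n$ that dominates the poorer transmitter's sense-only action by a suitable action of the richer one in this specific configuration. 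Without it the first claim is unproven, and since your belief-monotonicity induction invokes battery-monotonicity of $V_n$ at the $p=0$ step, the gap propagates to the second claim as well.
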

\begin{proof}
The proof is given in \cite{mehdi2}.
\end{proof}

Finally, Theorem \ref{thm:threshold} below shows that the optimal solution of the problem is a threshold policy with two or three thresholds depending on the system parameters. The threshold values depend on the state of the battery.

\begin{theorem}
\label{thm:threshold}
For any $p\in[0,\ 1]$ and $b\geq 0$, there exists thresholds $0\leq\rho_1(b)\leq\rho_2(b)\leq\rho_3(b)\leq 1$, all of which are functions of the battery state $b$, such that, for $b\geq 1$
\begin{align}
\pi^{*}(b,\ p)
&= \left\{
\begin{array}{rl}
D, & \text{if }\ 0\leq p\leq\rho_1(b)\ \text{or}\ \rho_2(b)\leq p\leq\rho_3(b)\\
O, & \text{if }\ \rho_1(b)\leq p\leq\rho_2(b),\\
T, & \text{if }\ \rho_3(b)\leq p\leq 1,\\\label{thresha}
\end{array} \right.
\end{align}
and for $\tau\leq b<1$,
\begin{align}
\pi^{*}(b,p)
&= \left\{
\begin{array}{rl}
D, & \text{if }\ 0\leq p\leq\rho_1(b)\ \text{or}\ \rho_2(b)\leq p\leq 1,\\
O, & \text{if }\ \rho_1(b)\leq p\leq\rho_2(b). \label{threshb}
\end{array} \right.
\end{align}
\end{theorem}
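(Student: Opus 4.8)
The plan is to exploit a structural asymmetry among the three action-value functions. For a fixed battery level $b$, both $V_O(b,\cdot)$ and $V_T(b,\cdot)$ are \emph{affine} in the belief $p$: collecting the terms in their defining expressions, each has the form $p\,s_A+c_A$, where the slope $s_A$ and intercept $c_A$ involve value functions evaluated only at the fixed beliefs $\lambda_0,\lambda_1$, and hence do not depend on $p$. In contrast, $V_D(b,\cdot)$ is \emph{convex}: by Lemma \ref{thm:convex}, $V(\cdot,\cdot)$ is convex in its second argument, and since $J(p)=\lambda_0(1-p)+\lambda_1 p$ is affine in $p$, each term $V(\cdot,J(p))$ is convex in $p$, so $V_D$, a nonnegative combination of such terms, is convex. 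Therefore the differences $V_D-V_O$ and $V_D-V_T$ are convex, so each region in which $O$ (respectively $T$) is at least as good as $D$ is a sub-level set of a convex function, hence an interval; the difference $V_T-V_O$ is affine, so it changes sign at most once.

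My second step would be to fix the geometry at the endpoints using Lemma \ref{thm:nondecreasing-battery}. At $p=0$ the three functions reduce to discounted continuation values at belief $\lambda_0$ with battery pairs $(\min\{b+1,B_{max}\},b)$ for $D$, $(\min\{b-\tau+1,B_{max}\},b-\tau)$ for $O$, and $(b,b-1)$ for $T$; since $\tau<1$ these pairs are ordered in the stated sense, so monotonicity in $b$ gives $V_D(b,0)\ge V_O(b,0)\ge V_T(b,0)$, i.e.\ $D$ is optimal at $p=0$. For the slopes, $s_T-s_O$ equals $\tau R$ plus $\beta$ times a nonnegative battery-increment of value functions, so $s_T>s_O$; moreover $V_T(b,1)-V_O(b,1)=\tau R>0$, since the $\lambda_1$-continuation terms of $V_O$ and $V_T$ coincide at $p=1$. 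Thus the lines $V_O$ and $V_T$ cross exactly once at some $\rho\in[0,1)$, with $V_O\ge V_T$ on $[0,\rho]$ and $V_T\ge V_O$ on $[\rho,1]$.

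I would then assemble the policy on the two pieces cut out by $\rho$. On $[0,\rho]$ action $T$ is dominated by $O$, so the comparison is $\max\{V_D,V_O\}$; as $V_D-V_O$ is convex and nonnegative at $p=0$, the belief set where $O$ wins is a single interval $[\rho_1,\rho_2]\subseteq[0,\rho]$, and $D$ is optimal on $[0,\rho_1]$ and on $[\rho_2,\rho]$. On $[\rho,1]$ action $O$ is dominated by $T$, so the comparison is $\max\{V_D,V_T\}$; here $V_D-V_T$ is convex and, by the previous step, nonnegative at $\rho$, so provided it is nonpositive at $p=1$ the set where $T$ wins is a single upper interval $[\rho_3,1]$ with $\rho\le\rho_3$, and $D$ is optimal on $[\rho,\rho_3]$. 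Merging the two $D$-regions $[\rho_2,\rho]$ and $[\rho,\rho_3]$ into $[\rho_2,\rho_3]$ yields \eqref{thresha} with $\rho_1\le\rho_2\le\rho\le\rho_3\le1$; the degenerate cases ($\rho_1=\rho_2$ when $O$ is never optimal, $\rho_3=1$ when $T$ is never optimal) are covered by the weak inequalities. For $\tau\le b<1$ the action $T$ is unavailable, so only $\max\{V_D,V_O\}$ remains, and the same convexity argument yields the single $O$-interval of \eqref{threshb}.

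The step I expect to be the main obstacle is justifying that the $T$-region is the \emph{top} interval $[\rho_3,1]$ rather than some interior band, i.e.\ that once blind transmission beats deferral it keeps beating it up to $p=1$. This reduces to showing $V_T(b,1)\ge V_D(b,1)$, which by a one-step expansion is equivalent to $R\ge\beta\,\mathds{E}[V(B'+1,\lambda_1)-V(B',\lambda_1)]$ for the relevant post-decision battery $B'$; in words, the marginal worth of one stored energy unit never exceeds the reward $R$ of a single successful transmission. I would establish this marginal-value bound $V(b+1,p)-V(b,p)\le R$ by a coupling argument: run the optimal policy from battery $b+1$ and let the $b$-system imitate every action until the single slot where it first lacks the energy the richer system spends, where it forfeits at most a discounted $R$ before the two systems recouple. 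Granting this bound (which complements Lemmas \ref{thm:convex} and \ref{thm:nondecreasing-battery}), the endpoint sign $V_D(b,1)-V_T(b,1)\le0$ is secured and the interval assembly above closes the proof; the remaining manipulations are routine bookkeeping on affine and convex functions.
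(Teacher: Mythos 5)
Your skeleton is the right one, and it is exactly the route the paper sets up with its two lemmas (the paper itself only refers to \cite{mehdi2} for details): $V_O(b,\cdot)$ and $V_T(b,\cdot)$ are affine in $p$ because their continuation values are evaluated only at the fixed beliefs $\lambda_0,\lambda_1$; $V_D(b,\cdot)$ is convex by Lemma~\ref{thm:convex} composed with the affine update $J$; sublevel sets of the convex differences $V_D-V_O$ and $V_D-V_T$ are intervals; and Lemma~\ref{thm:nondecreasing-battery} gives the endpoint facts $V_D(b,0)\ge V_O(b,0)\ge V_T(b,0)$ and $V_T(b,1)-V_O(b,1)=\tau R>0$, plus $s_T-s_O\ge\tau R$. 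The case $\tau\le b<1$ is complete as you argue, since no endpoint claim at $p=1$ is needed there. For $b\ge 1$, however, there are two problems. The cosmetic one: your assertion that $V_D-V_T$ is ``nonnegative at $\rho$'' is unjustified and is typically false (whenever the $O$-region extends up to $\rho$ we have $V_T(\rho)=V_O(\rho)\ge V_D(\rho)$); it is also unnecessary, because once $V_T(b,1)\ge V_D(b,1)$ is known, the convex function $V_D-V_T$ is nonpositive on an interval containing $p=1$, which is all the assembly requires. The substantive problem is precisely the step you flagged: the inequality $V_T(b,1)\ge V_D(b,1)$, equivalently $R\ge\beta q\left[V(\min\{b+1,B_{max}\},\lambda_1)-V(b,\lambda_1)\right]+\beta(1-q)\left[V(b,\lambda_1)-V(b-1,\lambda_1)\right]$, and the coupling argument you offer for the marginal-value bound $V(b+1,p)-V(b,p)\le R$ does not hold up.

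The coupling fails for two distinct reasons. First, this is a POMDP, not an MDP: an imitation policy for the battery-$b$ system must be measurable with respect to that system's \emph{own} observation history. The moment the poorer system forfeits a slot (defers) while the richer system senses or transmits, the richer system observes the channel (sensing outcome or ACK/NACK) and the poorer one does not; from then on the richer system's actions depend on information the poorer system never acquired, so ``imitate every action until recoupling'' is not an admissible policy. This is not pedantry: spending a unit of energy buys reward \emph{and} information, and the information component is exactly what must be bounded --- note that Lemma~\ref{thm:convex} gives $pV(y,\lambda_1)+(1-p)V(y,\lambda_0)\ge V(y,J(p))$, i.e.\ convexity pushes in the direction unfavorable to you. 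Second, even granting full information, the ``one forfeited slot, then recouple'' accounting breaks because $O$ consumes $\tau$ on a bad channel and $1$ on a good one: starting from gap $1$, if the richer system takes $O$ while the poorer has battery in $[\tau,1)$, the poorer can sense but not transmit, loses $(1-\tau)R$ on a good channel, and the gap becomes $\tau$ rather than $0$; a later forfeited transmission can cost up to another discounted $R$, so the naive total is $(1-\tau)R+\beta R>R$ whenever $\beta>\tau$ (e.g.\ $\beta=0.98$, $\tau=0.2$). A correct treatment (say, induction on value-iteration iterates) runs into the same crux in one identifiable place: when the richer system, with battery in $[1,2)$, transmits while the poorer one, with battery below $1$, must defer, the difference is $pR+\beta\,\mathds{E}\left[\,pV_n(\cdot,\lambda_1)+(1-p)V_n(\cdot,\lambda_0)-V_n(\cdot,J(p))\right]$, and one must show the value-of-information term is at most $(1-p)R$. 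Until that is established, the claim that the $T$-region is anchored at $p=1$ --- and hence the three-threshold form for $b\ge 1$ --- remains unproven.
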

\begin{proof}
The detailed proof of the theorem is given in \cite{mehdi2}.
\end{proof}
Theorem \ref{thm:threshold} proves that at any battery state $b\geq 1$, at most three threshold values are sufficient to characterize the optimal policy; whereas two thresholds suffice for $0\leq b<1$. However the optimal policy can even be simpler for some battery states and some instances of the problem as it is possible to have $\rho_2(b) = \rho_3(b)$, or even $\rho_1(b) = \rho_2(b) = \rho_3(b)$.

\section{Numerical Results}
\label{num}
In this section we use numerical techniques to characterize the optimal policy, and evaluate its performance. We utilize the value iteration algorithm to calculate the optimal value function. We numerically identify the thresholds for the optimal policy for different scenarios. We also evaluate the performance of the optimal policy, and compare it with some alternative policies in terms of throughput. 
\subsection{Optimal policy evaluation}

In the following, we assume that $B_{max}=5$, $\tau=0.2$, $\beta=0.98$, $\lambda_1=0.9$, $\lambda_0=0.6$, $R = 3$ and $q=0.1$. The optimal policy is evaluated using the value iteration algorithm. In Fig. \ref{5region} each state $(b,\ p)$ is illustrated with a different color corresponding to the optimal policy at that state. In the figure, the areas highlighted with blue color correspond to those states at which deferring the transmission is optimal, green areas correspond to the states at which transmitting opportunistically is optimal, and finally yellow areas correspond to the states for which transmitting without sensing is optimal. As seen in Fig. \ref{5region} any of the three policies (one, two, or three threshold policies) may be optimal depending on the level of the battery state. For example, when the battery state is $b=2$, one-threshold policy is optimal. The transmitter defers transmission up to a belief of state of $p=0.8$ and starts transmitting without sensing beyond this value. For no value of the belief state it opts for sensing the channel. On the other hand, when the battery state is $3.8$, two-threshold policy is optimal, and when the battery state is $2.8$, three-threshold policy is optimal. Considering the low probability of energy arrivals ($q=0.1$) and the relative high cost of sensing ($\tau=0.2$), it is interesting to notice that the transmitter senses the channel even when its battery state is below the transmission threshold, i.e., $b<1$. 

\begin{figure}[ht]
  \centering
    \includegraphics[scale=.14]{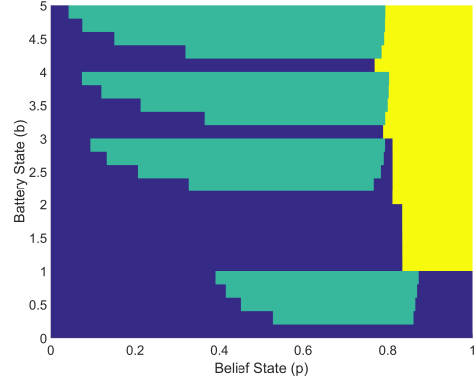}
		  \caption{Optimal thresholds for taking the actions D (blue), O (green), T (yellow) for $B_{max}=5$, $\tau=0.2$, $\beta=0.98$, $\lambda_1=0.9$, $\lambda_0=0.6$, $R = 3$ and $q=0.1$.}
			\label{5region}
\end{figure}

Next, we investigate the effect of the sensing cost, $\tau$, on the optimal policy. To illustrate this effect, we choose the system parameters as before, but increase the sensing cost from $\tau=0.2$ to $\tau=0.5$. Optimal action regions for this setup are shown in Fig. \ref{taucompare}. By comparing Fig. \ref{5region} and Fig. \ref{taucompare}, it is evident that a higher cost of sensing results in less incentive for sensing the channel. We observe in Fig. \ref{taucompare} that the green area has shrunk almost to nothing, i.e, the transmitter is more likely to take a risk and transmit without sensing, or defer its  transmission, when sensing consumes more energy.

\begin{figure}[ht]
  \centering
    \includegraphics[scale=.5]{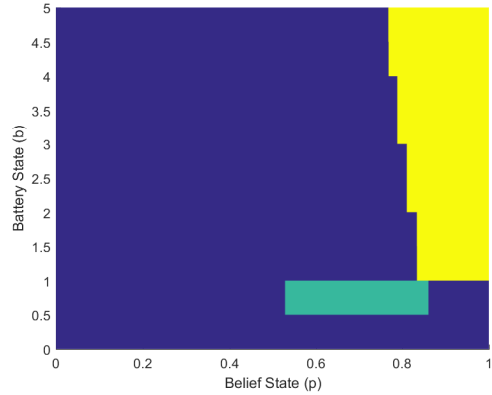}
		  \caption{Optimal thresholds for taking the actions D (blue), O (green), T (yellow) for $B_{max}=5$, $\tau=0.5$, $\beta=0.98$, $\lambda_1=0.9$, $\lambda_0=0.6$, $R = 3$ and $q=0.1$.}
			\label{taucompare}
\end{figure}
\vspace{-0.35cm}
\subsection{Throughput performance}
In this section, we compare the performance of the optimal policy with two alternative policies, namely a greedy policy and a single-threshold policy. In the greedy policy, the transmitter transmits whenever it has energy in its battery. In the single-threshold policy there are only two actions: defer (D) or transmit (T). We optimize the threshold corresponding to each battery state for the single-threshold policy using the value iteration algorithm. By choosing the parameters $B_{max}=5$, $\tau=0.1$, $\beta=0.999$, $\lambda_1=0.7$, $\lambda_0=0.2$, $R = 2$, the  throughput achieved by these three policies are plotted in Fig. \ref{3} with respect to the EH rate $q$. 

As expected we observe that the greedy policy performs the worst as it does not exploit the transmitter's knowledge about the state of the channel. We can see that by simply exploiting  the ACK/NACK feedback from the receiver, it is possible to achieve a higher throughput than the greedy policy for all values of the EH rate. On the other hand, by further introducing the channel sensing action the throughput of the system is substantially increased. The improvement is particularly higher for the mid-range of $q$ values, for which the transmitter benefits more from the flexibility offered by three actions.

\vspace{-0.35cm}

\begin{figure}[ht]
  \centering
    \includegraphics[scale=.5]{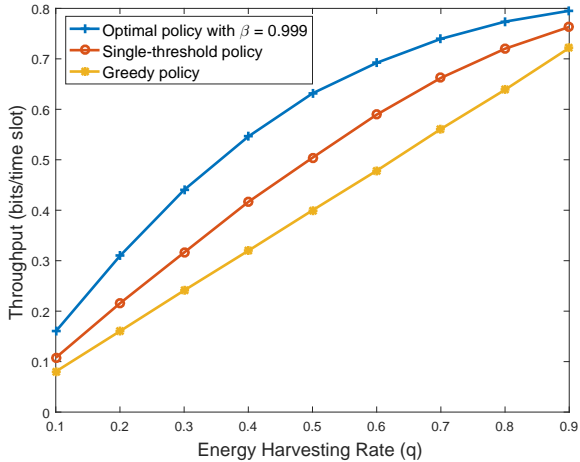}
		  \caption{Throughput comparison among the optimal, greedy, and single-threshold policies as a function of the EH rate, $q$. }
			\label{3}
\end{figure}
\vspace{-0.35cm}
\section{Conclusions and Future Work}
\label{concl}
In this work we considered an EH transmitter equipped with a battery, operating over a time varying finite-capacity wireless channel with memory, modeled as a Gilbert-Elliot channel. The transmitter receives ACK/NACK feedback after each transmission, which can be used to track the channel. We further consider channel sensing, which the transmitter can use to learn the current channel state at a certain energy and time cost. Therefore, at the beginning of each time slot, the transmitter has three possible actions to maximize the total expected discounted number of bits transmitted over an infinite time horizon: i) deferring the transmission to save its energy for future use, ii) transmitting at a rate of $R$ bits per time slot, and iii) sensing the channel to reveal the current channel state by consuming a portion of its energy and time, followed by transmission at a reduced rate consuming the remainder of the energy unit, only if the channel is in the good state. We formulated the problem as a POMDP, which is then converted into a MDP with continuous state space by introducing a belief parameter for the channel state. Then we proved that the optimal policy is a threshold policy, where the threshold values on the belief parameter depends on the battery state. We find the optimal threshold values numerically using the value iteration algorithm. In terms of throughput, we compared the optimal policy to the alternative policies, the greedy policy and a single-threshold policy which does not have channel sensing capability. We have shown through simulations that the channel sensing capability improves the performance significantly, thanks to the increased adaptability to the channel conditions it provides. For future studies, we will consider the case where the sensing is not perfect. Another interesting problem is to consider the case in which the EH transmitter has the option to choose the duration of the sensing which determines its accuracy.

\bibliographystyle{unsrt} 

\bibliography{Bibliography}
\end{document}